\documentclass[conference]{IEEEtran}
\pagestyle{plain}

% Margins
% \setlength{\oddsidemargin}{0in}
% \setlength{\evensidemargin}{0in}
% \setlength{\topmargin}{-0.5in}
% \setlength{\textwidth}{6.5in}
% \setlength{\textheight}{9in}
% \setlength{\headsep}{0.25in}
% \setlength{\parindent}{0in}
% \setlength{\parskip}{0.05in}

\usepackage[utf8]{inputenc}
\usepackage[draft]{hyperref}
\usepackage{enumitem}
\usepackage{amsthm,amssymb,amsmath}
% % \usepackage{array}
% \newtheoremstyle{case}{}{}{}{}{}{:}{ }{}
% \theoremstyle{case}
% \newtheorem{case}{Case}
\usepackage{graphicx,multirow}
\usepackage{hanging,colortbl}
\usepackage{listings,xcolor,sgame,float,bbm}
\usepackage{siunitx}
\sisetup{output-exponent-marker=\ensuremath{\mathrm{e}}}

\usepackage{comment}
% \usetikzlibrary{calc}
\usepackage{color,soul,mathtools}
\usepackage[noend]{algpseudocode}
\usepackage{algorithm,caption,subcaption}
\usepackage{stackengine}

\newtheorem{definition}{\textbf{Definition}
}
\newtheorem{theorem}{\textbf{Theorem}}
\newtheorem{corollary}{\textbf{Corollary}}[theorem]

\DeclareMathOperator*{\argmin}{\text{arg min}\,}

\DeclareMathOperator*{\sign}{\text{sign}\,}

\makeatletter
\newcommand{\ostar}{\mathbin{\mathpalette\make@circled\star}}
\newcommand{\make@circled}[2]{%
  \ooalign{$\m@th#1\smallbigcirc{#1}$\cr\hidewidth$\m@th#1#2$\hidewidth\cr}%
}
\newcommand{\smallbigcirc}[1]{%
  \vcenter{\hbox{\scalebox{0.77778}{$\m@th#1\bigcirc$}}}%
}
\makeatother

\algrenewcommand\algorithmicprocedure{}

%\title{Greedy Trust-Aware Stackelberg Routing Algorithm to Mitigate Network Congestion}

\title{TASR: A Novel Trust-Aware Stackelberg Routing Algorithm to Mitigate Traffic Congestion \vspace{-0.5ex}}% Network Congestion in Smart Transportation}

\author{\IEEEauthorblockN{Doris E. M. Brown, Venkata Sriram Siddhardh Nadendla, and Sajal K. Das}
\vspace{0.5ex}
\IEEEauthorblockA{
Department of Computer Science 
\\ 
Missouri University of Science and Technology 
\\ 
Rolla, MO 65409, USA 
\\ Email: \{deby3f, nadendla, sdas\}@mst.edu}
\vspace{-4ex}
}

\begin{comment}
\author{\IEEEauthorblockN{Doris E. M. Brown}
\IEEEauthorblockA{Department of Computer Science \\
Missouri Univ. of Science and Tech. \\ Rolla, Missouri 65401 \\ Email: deby3f@mst.edu} \\
\and
\IEEEauthorblockN{Venkata Sriram Siddhardh Nadendla}
\IEEEauthorblockA{Department of Computer Science \\
Missouri Univ. of Science and Tech \\ Rolla, Missouri 65401 \\ Email: nadendla@mst.edu}
\and
\IEEEauthorblockN{Sajal K. Das}
\IEEEauthorblockA{Department of Computer Science \\
Missouri Univ. of Science and Tech \\ Rolla, Missouri 65401 \\ Email: sdas@mst.edu}
}
\end{comment}

\begin{document}

\maketitle

\begin{abstract}
Stackelberg routing platforms (SRP) reduce congestion in one-shot traffic networks by proposing optimal route recommendations to selfish travelers. Traditionally, Stackelberg routing is cast as a partial control problem where a fraction of traveler flow complies with route recommendations, while the remaining respond as selfish travelers. 
In this paper, a novel Stackelberg routing framework is formulated where the agents exhibit \emph{probabilistic compliance} by accepting SRP's route recommendations with a \emph{trust} probability.
A greedy \emph{\textbf{T}rust-\textbf{A}ware \textbf{S}tackelberg \textbf{R}outing} algorithm (in short, TASR) is proposed for SRP to compute unique path recommendations to each traveler flow with a unique demand. Simulation experiments are designed with random travel demands with diverse trust values on real road networks such as Sioux Falls, Chicago Sketch, and Sydney networks for both single-commodity and multi-commodity flows. The performance of TASR is compared with state-of-the-art Stackelberg routing methods in terms of traffic congestion and trust dynamics over repeated interaction between the SRP and the travelers. Results show that TASR improves network congestion without causing a significant reduction in trust towards the SRP, when compared to most well-known Stackelberg routing strategies. 
\vspace{1ex}
\end{abstract}

\begin{IEEEkeywords}
Smart Transportation, Selfish Routing, Strategic Recommendations, Trust, Stackelberg Routing
\end{IEEEkeywords}

\section{Introduction}
Modern navigation systems present quickest-route recommendations, as opposed to revealing the network state information, in order to alleviate the cognitive load at the travelers. However, such systems reinforce selfish routing amongst travelers, which can increase traffic congestion in transportation networks \cite{macfarlane2019, roughgarden2005, roughgarden2002bad}. Hence, the goal of a smart transportation system is to steer its travelers' routing equilibrium \cite{wardrop1952road} towards a social-optimum, be it in terms of mitigating traffic congestion, carbon emissions, or any other social-welfare metric on the entire transportation network. Stackelberg routing \cite{roughgarden2001stackelberg} is one such proposed solution that reduces traffic congestion using strategic route recommendations, without increasing travelers' cognitive load.

Traditionally, Stackelberg routing is cast as a partial control problem where a fraction of the traveler demand complies with route recommendations presented by the Stackelberg routing platform (SRP), and the rest of the demand chooses their paths selfishly \cite{korilis1997achieving}. However, in the real world, travelers exhibit probabilistic compliance based on their trust in the SRP, wherein the route recommendations are accepted with some probability. To this end, this paper attempts to redesign Stackelberg routing under stochastic compliance when the travelers' trust probability is known to the SRP, thereby allowing the SRP to better route traffic demands towards a network flow resulting in lower congestion. This solution approach is of particular importance in smart transportation systems in which solutions are needed to effectively persuade the drivers to choose network paths which minimize the overall network latency when the drivers would ordinarily behave selfishly, but may probabilistically comply with a system’s path recommendations under certain conditions.

% this framework is only applicable to traffic networks under highly simplifying assumptions, and it does not necessarily reflect the behavior of drivers in real traffic networks who may comply with a route recommendation with some probability depending upon their trust in the system sharing the recommendation.

\subsection{Related Work on Stackelberg Routing} 
% In the Stackelberg routing \cite{roughgarden2001stackelberg} setting, network demand is comprised of a fraction of compliant demand that is centrally routed by the system, and a fraction of noncompliant demand that responds to the routing strategy of the compliant demand by choosing its routes selfishly. The objective of the system is to assign the compliant demand in such a way that the network congestion is minimized. A large amount of 
Most of the literature on Stackelberg routing strategies 
% is available, which is primarily 
focuses on theoretical guarantees on the price of anarchy in terms of traffic congestion 
%in  networks 
under various assumptions such as arbitrary latency functions \cite{kaporis2006price}, \cite{swamy2012effectiveness}, arbitrary networks \cite{bonifaci2010stackelberg}, \cite{karakostas2009stackelberg}, horizontal queues \cite{krichene2014stackelberg}, \cite{krichene2017stackelberg}, and repeated interaction settings \cite{krichene2016social}. However, it is known that even for simple parallel-path networks with non-decreasing linear latency functions, computing the optimal Stackelberg routing strategy is NP-hard \cite{roughgarden2001stackelberg}. For this reason, approximate Stackelberg strategies were developed, including the Largest Latency First, Scale, Aloof \cite{roughgarden2001stackelberg}, and ASCALE \cite{bonifaci2007impact}. Each of these strategies are \textit{opt-restricted}, i.e., the traffic assignment on any given road segment due to the aforementioned strategies is upper-bounded by the \emph{optimal complaint flow} which minimizes the overall network congestion when all the travelers on the network are complaint.
% routing no more compliant flow on each edge than that of the system optimum under the case in which all demand is compliant. \textcolor{red}{The last sentence is confusing and the meaning is unclear!} 
Traditionally, in Stackelberg routing literature, the non-compliant demand is assumed to observe the path choices made by the compliant demand, and respond with path choices that form a user equilibrium. These strategies aim to reduce network congestion by restricting the route choices of the non-compliant demands to routes that would drive the resulting user equilibrium closer to that of the system optimum.

From a different perspective, efforts have been made to theoretically determine the minimum number of compliant agents present in a network in order for the system to minimize total network latency \cite{sharon2018traffic}. However, this work assumes travelers are distinctly compliant or noncompliant, with compliant travelers opting in to a micro tolling system, giving the system control over their route choices through charges for taking certain routes. Like the previously mentioned works on Stackelberg routing, this work does not assume that travelers may have any probabilistic compliance in the system, nor does it consider degradation in the trust of compliant travelers in the system as they repeatedly experience higher latency paths compared to those paths taken by selfish travelers, leading compliant travelers to eventually opt out.

While it is possible for a traffic control center to utilize a traditional Stackelberg routing strategy to interact with and share path recommendations to individual travelers via a GPS-enabled smartphone application or vehicle navigation system, such traditional Stackelberg strategies are impractical to implement in this way due to the inability of noncompliant travelers to know the path choices made by compliant travelers or to coordinate their behavior with other noncompliant travelers. Furthermore, in some Stackelberg routing settings, it is assumed that human travelers are noncompliant travelers while autonomous vehicles are compliant agents \cite{kolarich2022stackelberg}. However, it is disadvantageous to assume all human drivers in a traffic network are fully noncompliant, preventing some fraction of traffic flow from being routed towards the system optimum in cases where a human traveler would accept recommendations shared by the system.

% \textcolor{red}{Does the introduction appear too short? Does the motivation stands out? Is it aligned with the scope of SMARTCOMP?}

\subsection{Novel Contributions} 
The main contributions of this paper are summarized below.
\vspace{-0.25ex}
\begin{enumerate}[label=(\roman*),leftmargin=3ex]
\setlength{\itemindent}{0ex}
\setlength{\itemsep}{1.5ex}
\item The interaction between the SRP and the individual travelers is modeled as a novel \textbf{\emph{probabilistic-compliance Stackelberg routing}} game, where each travel demand group (follower) probabilistically accepts a greedy route recommendation shared by the system (leader), and constructs a route decision. To the best of our knowledge, this is the first work to investigate probabilistic-compliance Stackelberg routing which is
% in the presence of receivers with varying trust in the system. Such a framework is particularly 
apt to real-world travelers as they may or may not accept the recommendation depending on their trust in the system. 

\item A \textbf{\emph{novel greedy, trust-aware Stackelberg routing algorithm called TASR}} is developed to compute and share path-profile recommendations for groups of diverse travel demands to mitigate traffic congestion in a network. 

\item The proposed method is validated on diverse \textbf{\emph{realistic simulation experiments}} that simulate travelers (with varying trust levels) on the Sioux Falls (South Dakota), Chicago (Illinois) Sketch, and Sydney (Australia) networks \cite{stabler2020}, and is found to be superior over other well-known Stackelberg routing strategies such as Largest Latency First (LLF), Scale, and Aloof strategies found in \cite{roughgarden2001stackelberg}, and ASCALE found in \cite{bonifaci2010stackelberg} and  \cite{bonifaci2007impact}. 
\\
% The superiority of the TASR algorithm is validated in comparison to the previously mentioned Stackelberg routing strategies in terms of resulting network congestion for networks of varying capacities and demand groups with diverse trusts in the system.
\end{enumerate}

The rest of the paper is organized as follows. In Section \ref{Sect: System Model}, the nonatomic Stackelberg routing game between a group of travelers and a central routing system is formally presented.  Section \ref{Sec: Proposed Methodology} presents the proposed trust-aware greedy Stackelberg routing strategy. Section \ref{Sect: Evaluation Methodology} describes the validation techniques and traffic network datasets utilized, while Section \ref{Sect: Results} discusses the experimental results. Section \ref{Sect: Conclusion} offers concluding remarks and plans for future work.

\section{System Model and Problem Formulation \label{Sect: System Model and Problem Formulation}}

\begin{figure}[!t]
\vspace{-0.1in}
\centering
\captionsetup{justification=centering}
\includegraphics[width=0.45\textwidth]{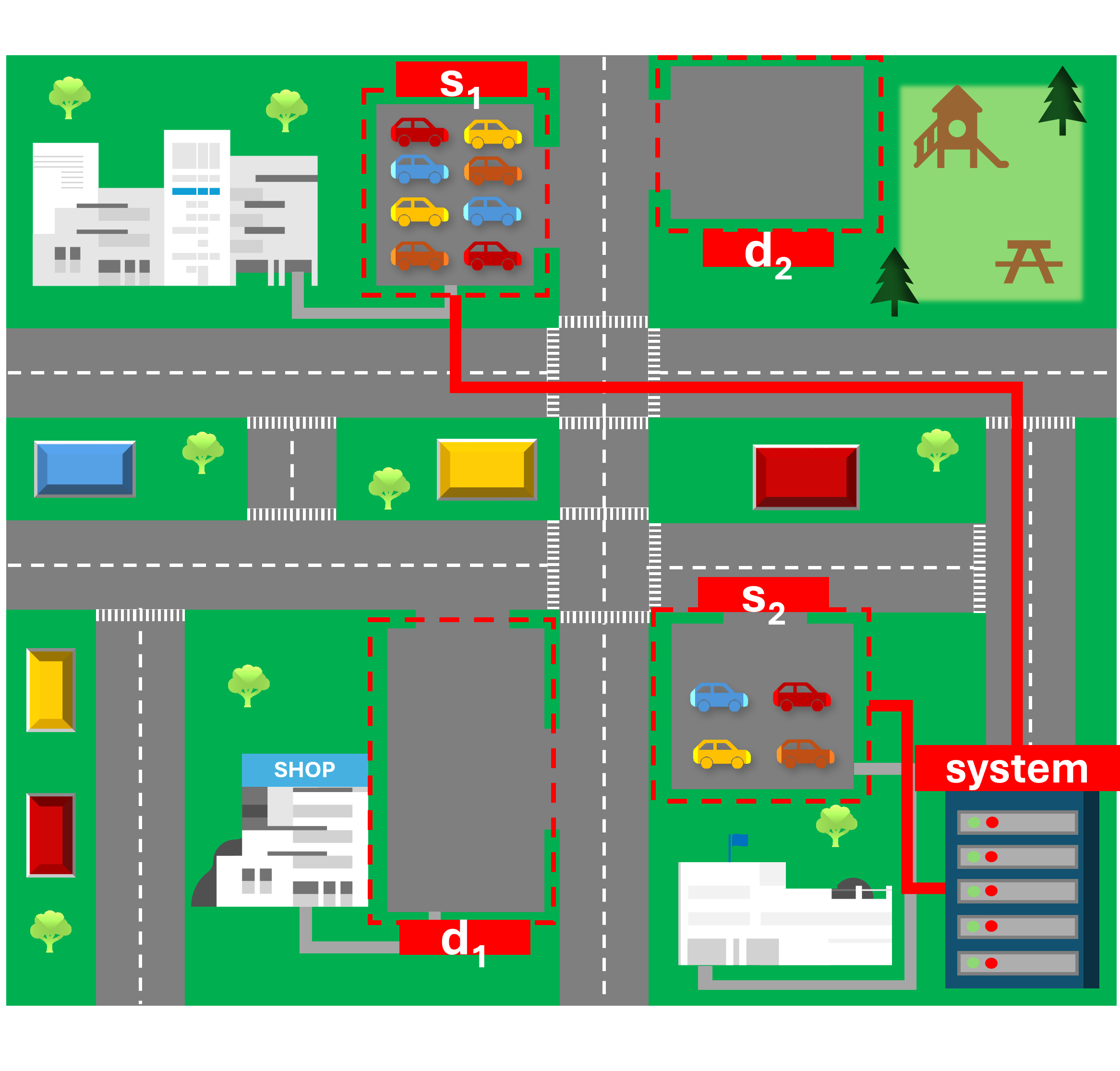}
\vspace{-0.15in}
\caption{Smart Transportation System Model.}
\label{Img: System Model}
\vspace{-0.15in}
\end{figure}

\subsection{System Model \label{Sect: System Model}}
Let the transportation network be represented as a graph $\mathcal{G} = \{\mathcal{V}, \mathcal{E}\}$, where $\mathcal{V}$ represents a set of locations or intersections, and the set of directed edges $\mathcal{E}$ represents the directed roadways between physical locations in $\mathcal{V}$. There exists a set of $k$ source-destination pairs $(s_1, d_1), \cdots, (s_k, d_k)$, called \textit{commodities}. For each commodity $i = 1, \cdots, k$, there is an associated finite demand $r_k$, where $\sum_{i = 1}^k r_k = r$ denotes the entire network demand. Each commodity demand $r_i$ corresponds to the amount of demand with source $s_i$ and destination $d_i$. For each commodity, the nonempty set of simple paths from $s_i$ to $d_i$ is $\mathcal{P}_i$, and $\mathcal{P} = \cup_i \mathcal{P}_i$ is the set of all network paths. Such commodities are illustrated in Figure \ref{Img: System Model} with a red dashed box and labels $s_i$ and $d_i$ at their corresponding origin and destination, with various paths available between each origin and destination.

Let $\boldsymbol{f} = \{\boldsymbol{f}_P\}_{\forall P \in \mathcal{P}}$ be the network flow vector, also referred to as an assignment of demands to network paths. Similarly, the flow vector of a commodity $(s_i, d_i)$ is denoted by $\boldsymbol{f}_i = \{f_P\}_{\forall P \in \mathcal{P}_i}$. For an edge $e$, the edge flow volume $f_e = \sum_{P: P\in e} f_P$. For a path $P$, the path flow volume $f_P$ is defined as $\sum_{e \in P} f_e$. Furthermore, a network flow $\boldsymbol{f}$ is \textit{feasible} if and only if for all commodities $\sum_{P \in \mathcal{P}_i} f_P = r_i$.

Using the edge flow $f_e$, the travel time (latency) on edge $e$ as a result of the congestion caused by the edge flow $f_e$ can be estimated using the Bureau of Public Roads (BPR) cost function as shown below:

\vspace{-0.15in}
\begin{equation}
\tau(f_e) = t^{ff}_e \left (1 + \lambda \left( \frac{f_e}{c_e}\right)^\beta \right),
\label{BPR}
\end{equation}

\vspace{-0.1in}
\noindent where $t^{ff}_e$ denotes the free-flow travel time along an edge $e$,  $c_e$ denotes the flow-capacity of an edge, and $\lambda$ and $\beta$ are shape coefficients which are commonly assumed to be 0.15 and 4 respectively. The latency of a path $P$ with  flow volume $f_p$ is defined as $\tau(f_P) = \sum_{e \in P} \tau(f_e)$. Assume that both $t^{tt}_e$ and $c_e$ are known to the system and travelers comprising the network demand at all edges $e \in \mathcal{E}$. Let $(G, r)$ denote an instance of the Stackelberg routing problem on network $G$ with total demand $r$, where the latency function is given by Equation (\ref{BPR}).

Suppose that the total amount of network travel demand $r$ is comprised of $N$ groups of homogeneous travelers, denoted by $\boldsymbol{r} = \{ r_{\alpha_1}, \cdots, r_{\alpha_N} \}$, where $\sum_{i = 1}^N r_{\alpha_i} = r$. Here, $\alpha \in [0,1]$ is a parameter representing a demand group's trust in the system (i.e., the probability that a demand group will choose to take a path recommended by the system), where trust increases as $\alpha \rightarrow 1$. Suppose each homogeneous travel demand group consists of travelers with similar trust in the system, $\alpha_i$, and similar prior beliefs about the flow on each path, $q_i$. For the sake of simplicity, assume $\alpha_1 < \alpha_2 < \cdots < \alpha_N$.

\begin{comment}
$A^C \in [0,1]$ represents the fraction of network flow that is \textit{compliant} (i.e., centrally controlled); $A^{NC} \in [0,1]$ is the fraction of the network flow that is $\textit{noncompliant}$, or is selfishly routed by a $A^{NC}$ fraction of the nonatomic travelers; and $A^{PC} \in (0,1)$ denotes the fraction of network flow that is neither compliant nor noncompliant, but \textit{partially-compliant}. It is assumed that $A^{C} + A^{NC} + A^{PC} = 1$. Let $(G, r, A^C, A^{PC})$ be an instance of the with a fraction $A^C$ of compliant demand and a fraction $A^{PC}$ of partially-compliant demand.

Assuming the travel demand is comprised of $N$ groups of homogeneous travelers, denoted by $\boldsymbol{r} = r^{\alpha_1}, \cdots, r^{\alpha_N}$, where $\alpha \in [0,1]$ is a parameter representing a demand group's trust in the system (i.e., the probability that a demand group will choose to take a path recommended by the system), where trust increases as $\alpha \rightarrow 1$.

Suppose each homogeneous traffic demand group consists of travelers with similar trust in the system $\alpha_i$ and similar prior beliefs about the flow on each path, $\boldsymbol{q}_i$. Assuming $\alpha_1 < \alpha_2 < \cdots < \alpha_N$, the demand group $r^{\alpha_1} = r\cdot A^{NC}$, $r^{\alpha_N} = r\cdot A^{C}$, and $\cup\{r^{\alpha_i} | i \not = 1, i \not = N\} = r \cdot A^{PC}$.
\end{comment}

\subsection{Problem Formulation \label{Problem Fomulation}}
The objective of the compliant demand group is to split its flow in accordance with the corresponding path recommendations shared by the system, forming an obedient flow $\boldsymbol{f}^C$. On the other hand, the noncompliant fraction of demand aims to split its flow along paths of minimum latency with respect to its prior belief, forming a selfish flow, $\boldsymbol{f}^{NC}$, where each demand chooses a path $P_i^{NC}$ based on its prior belief $q_i$.

Suppose all network demand communicates with an SRP (a.k.a., the system), as shown in Figure \ref{Img: System Model}. Let $P_{i,s}$ be a path recommendation sent from the system to a demand group, and let $\boldsymbol{P}_s = \{P_{1,s}, \cdots, P_{N,s}\}$ be the set of all path recommendations sent to all corresponding demand groups. After receiving a recommendation from the system, each demand chooses a path (or paths, if the demand is split). For partially-compliant demands, the $i^{th}$ demand group accepts the recommendation $P_{i,s}$ with probability $\alpha_i$ and rejects the recommendation with probability $(1 - \alpha_i)$, choosing $P_i^{NC}$ instead. Let $\boldsymbol{f}^*$ denote the final flow of the total network demand.

The system's objective is to share a set of path recommendations $\boldsymbol{P}_s$ that induce flow $\boldsymbol{f}^*(\boldsymbol{P}_s)$ of the total network demand, which results in congestion (i.e., minimum total network travel time) which can be computed as 

\vspace{-0.1in}
\begin{equation}
    C(\boldsymbol{f}^*(\boldsymbol{P}_s)) = \sum_{P \in \mathcal{P}} \boldsymbol{f}_P^*(\boldsymbol{P}_s) \cdot \tau(\boldsymbol{f}_P^*(\boldsymbol{P}_s)).
\label{Eqn: Congestion}
\end{equation}
\vspace{-0.1in}

Note that $\boldsymbol{f}^*$ is dependent on $\boldsymbol{P}_s$ along with the prior beliefs $q_i$ and trust values $\alpha_i$ for each $i = 1, \cdots, N$ demand group. Therefore, the system will construct the best-response path-profile recommendation $\boldsymbol{P}_s^*$, as shown below:
\begin{equation}
\begin{array}{lcll}
\boldsymbol{P}_s^*  & = & \displaystyle \argmin_{\boldsymbol{P}_s \in \mathcal{P}} & C( \boldsymbol{f}^*(\boldsymbol{P}_s))
\\[3ex]
&& \text{subject to} & \text{1. } \displaystyle f_P \geq 0
\\[1.5ex]
&&& \text{2. } \displaystyle \sum_{P \in \mathcal{P}} f_P = r
\\[3ex]
&&& \text{3. } \displaystyle \sum_{P \in \mathcal{P}} f_P \cdot \delta_P =  f_e, \ \forall \text{ } e \in \mathcal{E},
\end{array}
\tag{P1}
\label{Prob: Optimal Path Recs Sender}
\end{equation}
where $\delta_{P}$ is the link-path incidence taking a value of 1 when edge $e$ belongs to path $P$ and a value of 0 otherwise.

The \textit{Stackelberg equilibrium} of the game between the system and total network demand is defined as the pair $(\boldsymbol{P}_s^*, \boldsymbol{f}^*(\boldsymbol{P}_s^*))$ where $\boldsymbol{P}_s^*$ is computed as shown in Equation \ref{Prob: Optimal Path Recs Sender} and $\boldsymbol{f}^*(\boldsymbol{P}_s^*)$ denotes the best response paths of demand group when the system shares $\boldsymbol{P}_s^*$.

% \vspace{-0.05in}
\section{Proposed Algorithm \label{Sec: Proposed Methodology}}
Given that the Stackelberg game described in Section \ref{Sect: System Model} is equivalent to a linear bilevel programming problem \cite{colson2007overview}, it is NP-Hard to solve the Problem \ref{Prob: Optimal Path Recs Sender}. Moreover, even for networks with parallel links and linear latency functions, computing the opimal Stackelberg routing strategy is NP-Hard \cite{roughgarden2001stackelberg}. Given that Problem \ref{Prob: Optimal Path Recs Sender} assumes polynomial latency functions, it is preferable to compute the approximate best response recommendation path-profile of the sender as opposed to computing the exact optimal path-profile recommendations. 

\subsection{Approximate Best-Response Path-Profile Computation}
The proposed Stackelberg routing is a greedy strategy that prioritizes demand groups with greatest trust, saturating paths with smallest travel time under a special instance of the graph $(G,r^{\circledast})$, where $r^{\circledast} = r_{\alpha = 1}$ a travel demand group consisting of a completely compliant demand. The proposed algorithm is called TASR. As in most opt-restricted strategies, the optimal path assignment for the case in which the entire demand is compliant is computed. The problem of finding an optimal flow $\boldsymbol{f}^{\circledast}$ for an instance  $(G,r^{\circledast})$ is formulated as a convex optimization problem \cite{beckmann1956studies} as follows:
\begin{equation}
\begin{array}{ll}
\text{minimize} & CC(\boldsymbol{f}^{\circledast}) \triangleq \displaystyle \sum_{e \in \mathcal{E}} f^{\circledast}_e \cdot \tau(f^{\circledast}_e)
\\[3ex]
\text{subject to} & \text{1. } \displaystyle f^{\circledast}_P \geq 0
\\[1.5ex]
& \text{2. } \displaystyle \sum_{P \in \mathcal{P}} f^{\circledast}_P = r^{\circledast}
\\[3ex]
& \text{3. } \displaystyle \sum_{P \in \mathcal{P}} f^{\circledast}_P \cdot \delta_P =  f^{\circledast}_e \text{  } \forall \text{ } e \in \mathcal{E},
\end{array}
\label{Eqn: SO Flows Subgraph}
\end{equation}
where $\delta_{P}$ is the link-path incidence taking a value of 1 when edge $e$ belongs to path $P$ and a value of 0 otherwise.

%The intuition behind the proposed method is as follows: in order to minimize the total network travel time, it is necessary to maximize the number of partially-compliant travelers who are persuaded to follow the recommedation of the system. This is equivalent to maximizing the fraction of compliant agents in the network. As such, we are expanding the fraction of compliant agents to include the maximum number of partially compliant agents. 

The proposed method for finding an approximately optimal set of path recommendations for a single-commodity instance $(G, \boldsymbol{r})$ is stated in Algorithm \ref{alg:tasr}. Note that while the first step in TASR algorithm is to compute $\boldsymbol{f}^{\circledast}$, the paths in $\boldsymbol{f}^{\circledast}$ are ordered by \textit{increasing} latency. The system then assumes the noncompliant demand will choose paths in a selfish manner. In response to the probable flow of the noncompliant demand, the system greedily assigns path recommendations in order of increasing latency to the partially-compliant demand groups. Note that by assigning path recommendations in the order of smallest latency first to the partially compliant demands, where demands are ordered by increasing trust, TASR attempts to maximize the probability that a recommended path is chosen by its corresponding partiallly-compliant demand group. Additionally, this guarantees that paths of largest latency are not unnecessarily recommended to demands, thereby increasing network congestion may they be accepted. Finally, the compliant demand is assigned to the remaining unsaturated paths in order of smallest latency until no compliant demand remains.

For the multi-commodity setting, the Multi-Commodity TASR algorithm (Algorithm \ref{alg:mc tasr}) is an extension of the Single-Commodity TASR that prioritizes commodities according to largest fraction of noncompliant demand. Since the system has greater control over partially-compliant and fully compliant demands, this allows the system to iteratively mitigate the selfish path choices of the noncompliant demand for each commodity by assigning compliant demands to paths unfavorable to the noncompliant demand. For each commodity, TASR is run, and the resulting demand on any edges that are shared (i.e., overlap) with those of other commodities is assumed to be existing demand in network of those subsequent commodities.

\subsection{TASR Properties}

%\textcolor{red}{SN: TASR is not an optimal algorithm. In fact, it is a greedy algorithm, which was never proven to be optimal. On the contrary, experimental results in Table IV proves that it is indeed the case. Therefore, one would expect approximation ratio - i.e. the ratio between TASR's performance to optimal performance, when presenting theoretical guarantees. Currently, all I see are some basic properties of the proposed method, claimed without proofs.}

In this section, properties of the performance of the proposed TASR algorithm are presented.

\begin{definition}
    For an instance $(G, r)$, the \textit{efficiency ratio} of TASR($G, r)$ is the ratio of the cost of the resulting path-profile recommendation $\boldsymbol{P}_s$ and the cost of the solution obtained by CC($G, r^{\circledast}$). In other words, the efficiency ratio is given by $\frac{C(\boldsymbol{f}^*(\boldsymbol{P}_s))}{CC(\boldsymbol{f}^{\circledast})}$.
\end{definition}

\begin{theorem}
The optimal path-profile recommendation of the system is a path-profile recommendation $\boldsymbol{P}_s^*$ such that
\begin{equation}
\frac{C(\boldsymbol{f}^*(\boldsymbol{P}_s))}{CC(\boldsymbol{f}^{\circledast})} = 1.
\end{equation}
\end{theorem}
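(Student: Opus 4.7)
The plan is to establish the claim by showing that $CC(\boldsymbol{f}^{\circledast})$ is a universal lower bound on the induced-flow cost, and then arguing that this bound is tight for the optimal path-profile recommendation. First, I would invoke the fact that $\boldsymbol{f}^{\circledast}$ is defined as the minimizer of the convex program \eqref{Eqn: SO Flows Subgraph} over all flows satisfying the three feasibility constraints (non-negativity, demand conservation, and link-path incidence). Since the BPR latency function \eqref{BPR} is strictly increasing and convex, $\boldsymbol{f}^{\circledast}$ is the unique system-optimal flow carrying the total demand $r$.

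Next, I would observe that for any path-profile recommendation $\boldsymbol{P}_s$, the induced flow $\boldsymbol{f}^*(\boldsymbol{P}_s)$ is itself feasible: all demand groups route their mass to satisfy their respective origin-destination demands, so conservation holds, flows are non-negative, and edge flows are obtained by aggregating the path flows via $\delta_P$. Because $\boldsymbol{f}^*(\boldsymbol{P}_s)$ therefore lies in the same feasible set over which $\boldsymbol{f}^{\circledast}$ minimizes cost, we immediately get $C(\boldsymbol{f}^*(\boldsymbol{P}_s)) \geq CC(\boldsymbol{f}^{\circledast})$, so the efficiency ratio is at least $1$ for every $\boldsymbol{P}_s$.

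For the reverse direction, I would construct $\boldsymbol{P}_s^*$ by decomposing the optimal path-flow $\boldsymbol{f}^{\circledast}$ across the $N$ demand groups in proportion to their masses $r_{\alpha_i}$, choosing the recommendation assigned to group $i$ to match a splitting consistent with $\boldsymbol{f}^{\circledast}$. Under the best-case realization in which every demand group's sampled behavior coincides with its recommendation (for instance the fully-trusting regime $\alpha_i = 1$, or, for partially-compliant groups, when their selfish choice under prior belief $q_i$ happens to align with $\boldsymbol{f}^{\circledast}$), the induced flow satisfies $\boldsymbol{f}^*(\boldsymbol{P}_s^*) = \boldsymbol{f}^{\circledast}$. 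Combined with the lower bound, this forces the ratio to equal $1$, proving optimality of $\boldsymbol{P}_s^*$ in exactly this sense.

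The main obstacle is achievability under stochastic compliance: when some $\alpha_i < 1$ and the prior belief $q_i$ steers the noncompliant fraction off the system-optimal support, no recommendation $\boldsymbol{P}_s$ can force the realized flow to coincide with $\boldsymbol{f}^{\circledast}$, so the infimum of the ratio is strictly larger than $1$ and the theorem's equality must be interpreted as the best attainable benchmark rather than as a realized bound in every instance. I would handle this by framing the theorem as a characterization: the ratio $1$ is simultaneously a universal lower bound (guaranteed by feasibility plus optimality of $\boldsymbol{f}^{\circledast}$) and the target that $\boldsymbol{P}_s^*$ would achieve if the trust and prior-belief profile permitted inducing the system-optimal flow. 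The hard part is making the achievability step rigorous without assuming full compliance, which I would address by restricting attention to the subclass of instances where $\boldsymbol{f}^{\circledast}$ is realizable by some admissible response of every demand group to a suitably chosen recommendation.
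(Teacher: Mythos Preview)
Your argument is correct and considerably more careful than the paper's own proof, which is essentially a two-line observation. The paper simply notes that $\boldsymbol{f}^{\circledast}$ solves the special case of Problem~\ref{Prob: Optimal Path Recs Sender} in which all demand is fully compliant, and then states that if TASR returns a $\boldsymbol{P}_s$ with efficiency ratio $1$, that $\boldsymbol{P}_s$ must be optimal. In other words, the paper only argues the easy direction (ratio $=1$ implies optimality, because $CC(\boldsymbol{f}^{\circledast})$ is by construction the minimum attainable cost), and does not attempt to show achievability at all.

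What you do differently: you explicitly prove the lower bound $C(\boldsymbol{f}^*(\boldsymbol{P}_s)) \ge CC(\boldsymbol{f}^{\circledast})$ by arguing that every induced flow is feasible for the convex program~\eqref{Eqn: SO Flows Subgraph}, and you then try to establish the reverse direction by constructing a recommendation that realizes $\boldsymbol{f}^{\circledast}$. You also correctly identify the achievability gap under genuine stochastic compliance (when some $\alpha_i<1$ and the selfish choices do not align with $\boldsymbol{f}^{\circledast}$), which the paper's proof does not address or even acknowledge. Your framing of the theorem as a characterization---ratio $1$ is the universal lower bound and the benchmark that $\boldsymbol{P}_s^*$ attains whenever the instance permits inducing $\boldsymbol{f}^{\circledast}$---is a more honest reading of what the statement actually guarantees than the paper's own treatment. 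The extra work in your feasibility and achievability steps buys rigor; the paper's brevity buys nothing beyond the tautology that hitting the global minimum is optimal.
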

\begin{proof}
Given Equation \ref{Eqn: SO Flows Subgraph}, $\boldsymbol{f^{\circledast}}$ is an optimal solution to the special case of Problem \ref{Prob: Optimal Path Recs Sender} for an instance $(G, r^{\circledast})$ in which all demand is completely compliant. If TASR($G, r$) results in a path-profile recommendation $\boldsymbol{P}_s$ such that the efficiency ratio is 1, the path-profile recommendation $\boldsymbol{P}_s$ must be optimal. 
\end{proof}

\begin{definition}
    A flow vector $\boldsymbol{f}^{\prime}$ is a \textit{subflow} of the network flow $\boldsymbol{f}$ if and only if $f^{\prime}_P \leq f_P, \forall P \in \mathcal{P}$.
\label{Def: Subflow}
\end{definition}

\begin{corollary}
\label{Cor: NC Subflow}
If $\boldsymbol{f}^{NC}$ is not a subflow of $\boldsymbol{f}^{\circledast}$, the efficiency ratio of TASR($G, r$) will be at least 
\begin{equation}
    \frac{CC(\boldsymbol{f}^{\circledast}) + C(\boldsymbol{f}^{NC})}{CC(\boldsymbol{f}^{\circledast})}.
\end{equation}
\begin{comment}
\begin{equation}
|C(\boldsymbol{f}^*(\boldsymbol{P}_s)) - CC(\boldsymbol{f}^{\circledast})| \geq C(\boldsymbol{f}^{NC}).
\end{equation}
\end{comment}
\end{corollary}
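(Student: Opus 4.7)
The plan is to reduce the corollary to the cost lower bound $C(\boldsymbol{f}^*(\boldsymbol{P}_s)) \geq CC(\boldsymbol{f}^{\circledast}) + C(\boldsymbol{f}^{NC})$, which upon dividing by $CC(\boldsymbol{f}^{\circledast})$ yields the stated efficiency-ratio inequality. The central idea is that TASR is opt-restricted by construction --- the recommendation-following demand is placed onto the paths of $\boldsymbol{f}^{\circledast}$ in order of increasing latency --- so whenever $\boldsymbol{f}^{NC}$ carries mass outside the support of $\boldsymbol{f}^{\circledast}$, that mass must appear in $\boldsymbol{f}^*$ as extra edge flow on top of the optimal-compliant pattern rather than as a substitute for it.

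First, I would decompose the final edge flow as $f^*_e = f^C_e + f^{NC}_e$, where $f^C_e$ is the contribution from the recommendation-accepting (compliant plus accepted partially-compliant) demand. Invoking Definition~2, the hypothesis that $\boldsymbol{f}^{NC}$ is not a subflow of $\boldsymbol{f}^{\circledast}$ produces at least one path $P$ with $f^{NC}_P > f^{\circledast}_P$, so $\boldsymbol{f}^{NC}$ cannot be nested inside $\boldsymbol{f}^{\circledast}$, while the greedy TASR filling still saturates the remaining paths of $\boldsymbol{f}^{\circledast}$. This yields the edgewise inequality $f^*_e \geq f^{\circledast}_e + f^{NC}_e$. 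I would then exploit the monotonicity of the BPR latency in Equation~(\ref{BPR}): since $\tau(f^*_e) \geq \tau(f^{\circledast}_e)$ and $\tau(f^*_e) \geq \tau(f^{NC}_e)$ hold simultaneously, splitting the leading factor gives
\begin{equation*}
f^*_e \, \tau(f^*_e) \;=\; (f^{\circledast}_e + f^{NC}_e)\,\tau(f^*_e) \;\geq\; f^{\circledast}_e\,\tau(f^{\circledast}_e) + f^{NC}_e\,\tau(f^{NC}_e),
\end{equation*}
and summing over all $e \in \mathcal{E}$ closes the argument.

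The step I expect to be the main obstacle is the edgewise inequality $f^*_e \geq f^{\circledast}_e + f^{NC}_e$ itself. It is immediate on edges where the noncompliant flow pushes strictly beyond $f^{\circledast}_e$, but on the remaining edges one must argue that TASR's greedy allocation does not shrink the recommendation-following contribution below $f^{\circledast}_e$ merely because some demand went noncompliant on other edges. I would handle this by appealing to the opt-restricted, greedy description of the TASR algorithm --- recommendations cover $\boldsymbol{f}^{\circledast}$ starting from the smallest-latency paths independently of where $\boldsymbol{f}^{NC}$ goes --- together with the observation that the non-subflow hypothesis is precisely what prevents the noncompliant demand from being absorbed into the optimal-compliant pattern, forcing the two flows to combine additively on the offending paths.
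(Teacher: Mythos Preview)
Your overall reduction --- show $C(\boldsymbol{f}^*(\boldsymbol{P}_s)) \geq CC(\boldsymbol{f}^{\circledast}) + C(\boldsymbol{f}^{NC})$ and then divide by $CC(\boldsymbol{f}^{\circledast})$ --- is exactly what the paper does, and your invocation of Definition~\ref{Def: Subflow} to extract a path with $f^{NC}_P > f^{\circledast}_P$ matches the paper's opening sentence verbatim. Where you diverge is in attempting to supply an edge-level mechanism for the cost bound; the paper itself simply asserts the bound in one line, justified only by ``the paths of the noncompliant demand groups cannot be influenced by the system,'' with no decomposition or monotonicity argument at all.

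The genuine gap is the edgewise inequality $f^*_e \geq f^{\circledast}_e + f^{NC}_e$ you rely on, which cannot hold across the network. Both $\boldsymbol{f}^*$ and $\boldsymbol{f}^{\circledast}$ are feasible flows for the \emph{same} total demand $r$, so if your inequality held on every path one would obtain
\[
r \;=\; \sum_{P\in\mathcal{P}} f^*_P \;\geq\; \sum_{P\in\mathcal{P}} f^{\circledast}_P + \sum_{P\in\mathcal{P}} f^{NC}_P \;=\; r + \sum_{P\in\mathcal{P}} f^{NC}_P,
\]
a contradiction whenever any noncompliant demand is present. You correctly flag this step as the main obstacle, but the justification you offer --- that TASR's greedy filling ``saturates the remaining paths of $\boldsymbol{f}^{\circledast}$'' while the noncompliant flow sits additively on top --- cannot work: the recommendation-following portion of the demand is strictly less than $r$ and therefore cannot by itself fill all of $\boldsymbol{f}^{\circledast}$. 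Consequently the BPR-monotonicity split that follows has no foundation. In short, your plan matches the paper at the level of the headline inequality, but the concrete edgewise mechanism you introduce to justify it fails; the paper sidesteps the issue by not attempting any such decomposition.
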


\begin{proof}
    From Definition \ref{Def: Subflow}, if $\boldsymbol{f}^{NC}$ is not a subflow of $\boldsymbol{f}^{\circledast}$, then $f^{NC}_P > f^{\circledast}_P$, for at least one path $P \in \mathcal{P}$. Therefore, the cost of the solution to TASR($G, r$) will always be at least $CC(\boldsymbol{f}^{\circledast}) + C(\boldsymbol{f}^{NC})$, since the paths of the noncompliant demand groups cannot be influenced by the system.
\end{proof}

%Note that since the noncompliant demand group is not influenced by recommendations shared by the system, the resulting flow of noncompliant agents cannot be controlled by the system. For this reason, the congestion resulting from the noncompliant flow will always be present unless $\boldsymbol{f}^{NC}$ is a subflow of $\boldsymbol{f}^{\circledast}$.

\begin{algorithm}
\caption{Single-Commodity TASR$(G, \boldsymbol{r})$ }\label{alg:tasr}
%Get approximately optimal flow assignment of network demand
%TASR$(G, \boldsymbol{r})$ 
\begin{algorithmic}[1]

\vspace{1ex}
\Statex \textit{// Compute optimal flow for complete compliance instance}
\State $\boldsymbol{f}^{\circledast}$ = \Call{CC}{$G, r^{\circledast}$}
\vspace{1ex}

\Statex \textit{// Sort optimal $\boldsymbol{f}^{\circledast}$ paths by increasing latency}
\State $\boldsymbol{f}^{\circledast}$ = \Call{SortByLatency}{$\boldsymbol{f}^{\circledast}$, reverse = True} 
\vspace{1ex}

\Statex \textit{// Predict flow chosen by noncompliant demand}
\State $\boldsymbol{f}^{NC}$ = \Call{GetChosenPath}{$r^{\alpha_1}$}
\vspace{1ex}

\Statex \textit{// Consider next unsaturated path}
\State $P_{i,s}$ = \Call{Sample}{$f^{\circledast}$}
\State $\boldsymbol{P}^*[r^{\alpha_1}] = P_{i,s}$
\\
%\Statex \textit{// Assign noncompliant demand to paths in $\boldsymbol{f}^{NC}$}
\Statex \textit{// Assume in resulting flow, demand acts selfishly}
\For{$i$ \textbf{in} $\boldsymbol{f}^{NC}$}
\State $\boldsymbol{\hat{f}}^*[i]$ = $\boldsymbol{f}^{NC}[i]$
\EndFor
\\
\Statex \textit{// For each partially-compliant demand}
\For{$i$ \textbf{in} $r_{\alpha_2}, \cdots, r_{\alpha_{N-1}}$}
    \State $P_{i,s}$ = \Call{Sample}{$f^{\circledast}$}
    \State $P_i^{NC}$ = \Call{CheckRecommendation}{$q_i$, $P_{i,s}$}
    \Statex \textit{\hspace{18pt}// If demand likely to accept recommendation}
    \If{\Call{CheckDecision}{$i$, $P_{i,s}$} = $P_{i,s}$}
    \Statex \textit{\hspace{30pt} // assume demand chooses path}
    \State $\hat{f}^*$[$P_{i,s}$] += $i$
    \Statex \textit{\hspace{30pt} // recommend path to demand}
    \State $\boldsymbol{P}^*[i] = P_{i,s}$
\Else
\If{\Call{CheckDecision}{$i$, $P_{i,s}$} = $P_i^{NC}$}
\Statex \textit{\hspace{30pt} // assume demand acts selfishly}
\State $\hat{f}^*$[$P_i^{NC}$] += $i$ 
\Statex \textit{\hspace{30pt} // recommend path to demand}
    \State $\boldsymbol{P}^*[i] = P_{i,s}$
\EndIf
\EndIf
\EndFor
\\
\Statex \textit{// Saturate unsaturated paths of $f^{\circledast}$ with compliant demand}
\For{$i$ \textbf{in} $\boldsymbol{f}^{\circledast}$}
\If{\textbf{not }\Call{IsSaturated}{$f^*[i]$}}
\State $\hat{f}^*[i]$ = $r^{\alpha_N}$    \State $\boldsymbol{P}^*[ r^{\alpha_N}] = i$
\EndIf
\EndFor
return $\boldsymbol{P}^*$
%Saturate remaining lowest latency paths with compliant demand
\end{algorithmic}
\label{Alg: TASTT}
\end{algorithm}

\begin{algorithm}
\caption{Multi-Commodity TASR$(\mathcal{G}, \mathcal{R})$}\label{alg:mc tasr}
%Multi-Commodity TASR$(\mathcal{G}, \mathcal{R})$
\begin{algorithmic}[1]
\Statex \textit{// Sort commodities in descending order of fraction of noncompliant demand}
\State Instances = \Call{SortByNonCompliant}{$\mathcal{G}, \mathcal{R}$}
\\
%\For{$(G, \boldsymbol{r})$ in $(\mathcal{G}, \mathcal{R})$}
\For{$i$ in Instances}
\State NextInstance = Instances[i:]
\State $\boldsymbol{P}_i^*$ = \Call{TASR}{$G, \boldsymbol{r}$}
\For{$j$ in NextInstances}
\If{\Call{ShareEdges}{$i, j$}}
\State \Call{UpdateDemands}{$\boldsymbol{P}_i^*$, j}
\EndIf
\EndFor
\EndFor
\State return $\boldsymbol{P}^*$
\end{algorithmic}
\end{algorithm}

\section{Trust Dynamics \label{Sect: Trust Dynamics}}
% \vspace{-0.06in}
This section focuses on how the $i^{th}$ traffic demand group updates its trust $\alpha_i$ in the system after traversing the network. Given that human trust is dynamic and changes through repeated interactions with a system, it is assumed that the system will have previously interacted with each traffic demand, observed their interaction-to-interaction path choices, and approximated each demand's trust $\alpha_i$ and prior belief $q_i$.

Let $P_i^*$ be a path that minimizes the expected travel time of $i^{th}$ traffic demand group with respect to its prior belief $q_i$.
\begin{equation}
P_i^* = \argmin_{P \in \mathcal{P}} q_i(f_{P_i}) \cdot \tau(f_{P_i}).
\end{equation}

Given $\alpha_i$, the $i^{th}$ traffic demand group will either accept the recommendation from the system $P_{i,s}$ or reject the recommendation, choosing $P_i^*$ instead. Let $P_i^{\circledast}$ denote the path chosen by the $i^{th}$ demand group after receiving a recommendation from the system. After traversing the network, the $i^{th}$ demand group will experience some regret $B_i$ for having interacted with the system rather than relying only on its prior belief to choose a path through the network. Let $B_i$ be defined as:

\begin{equation}
B_i = \tau(f_{P_i^{\circledast}}) - \tau(f_{P_i^*}) 
\label{Eqn: Regret}
\end{equation} 

The trust dynamics of $i^{th}$ demand group are modeled As:
\begin{equation}
\alpha^+ = 
\begin{cases}
  \text{min}\{1, \alpha_i - \varepsilon \cdot \sign(\nabla B_i)\} & \text{if }B_i \leq 0\\    
  \text{max}\{\alpha_i - \varepsilon \cdot \sign(\nabla B_i), 0\}  & \text{otherwise},  
\end{cases}
\label{Eqn: Trust Update}
\end{equation}
where $\varepsilon$ is a fixed step size. $\varepsilon$ can be interpreted as a penalty that a demand applies to its trust when it receives unfavorable recommendations, or a reward when it receives favorable recommendations.

\section{Evaluation Methodology and Datasets \label{Sect: Evaluation Methodology}}

\subsection{Performance Metrics}

%\textcolor{red}{Write this section before writing the code. Consider scalability by demand first. Congestion reduction with and without the proposed method, Runtime Scalability in terms of graph size, traveler demand... }

The performance of the proposed TASR strategy is evaluated based on network congestion, where congestion is the total travel time experienced by all demand in the network (refer to Equation \ref{Eqn: Congestion}). The congestion of the network in the scenario in which all network demand is centrally routed is used as a baseline. The network congestion obtained under TASR is compared to that of four prominent Stackelberg routing algorithms, as noted in Section \ref{Sect: Experiments, Datasets, Preprocessing} under various classes of demand (in terms of trust), amounts of total demand, and different real-world networks. 

Trust is also considered as a metric of performance, where the trust of a demand group after interacting with the system is calculated according to Equation \ref{Eqn: Trust Update}.

\subsection{Experiments, Datasets, and Preprocessing \label{Sect: Experiments, Datasets, Preprocessing}}

Four transportation networks are simulated in order to evaluate the performance of the TASR algorithm against the following Stackelberg routing strategies: LLF \cite{roughgarden2001stackelberg}, Scale \cite{roughgarden2001stackelberg}, Aloof \cite{roughgarden2001stackelberg}, and ASCALE \cite{bonifaci2007impact}. The proposed method is implemented in Python 3.10.9 using primarily the NumPy library. An implementation of the Frank-Wolfe algorithm \cite{matteo2021} is used to compute optimal traffic flows for networks with only compliant travelers, abbreviated as CC in each figure. 

Networks simulated include Sioux Falls, Chicago Sketch, and Sydney. Specific information regarding the each network is maintained by the Transportation Networks for Research Core Team \cite{stabler2020} and publicly available. In the single-commodity setting, the Sioux Falls and Chicago Sketch networks are considered. For the Sioux Falls network, the physical link parameters of \cite{kaddoura2014optimal} are followed. In the multi-commodity setting, reasonable traffic demands (i.e., demand not exceeding the capacity of the network) were simulated on the Sioux Falls, Chicago sketch, Austin, and Sydney, and the network congestion of each was compared for the TASR, LLF, Scale, ASCALE, and Aloof algorithms. For all experiments, five different classes of traffic demand are considered, with the following corresponding trust parameters: $\alpha_1 = 0$, $\alpha_2 = 0.25$, $\alpha_3 = 0.5$, $\alpha_4 = 0.75$, and $\alpha_5 = 1$. For each network, a value $\Delta$ is considered, where $\Delta$ denotes the average amount of demand per network edge.

The following modifications to the LLF, Scale, ASCALE, and Aloof algorithms were made in order to implement them for comparison to TASR. Given that LLF, Scale, ASCALE, and Aloof do not consider partially-compliant agents, to implement each algorithm, demand groups with $\alpha_i \geq 0.5$ are considered compliant, while all other demand groups are considered non-compliant. Additionally, in the implementation of ASCALE, an ideal scaling value of $\rho = 1 + \sqrt{1 - \hat{\mu}}$, following from \cite{swamy2012}, was considered for preliminary experimentation, where $\hat{\mu}$ denotes the total fraction of compliant demand. In each experiment, each demand was scaled by $\rho$ and rounded to the nearest integer value. For the sake of fairness in Scale and ASCALE, since the compliant demand groups have differing trusts in the system, each scaled optimal path demand in each experiment was recommended to an equal fraction of each compliant demand group. In the implementation of Aloof, each experimental total network demand was scaled by $\hat{\mu}$.

% For fairness when implementing SCALE and ASCALE, for compliant demand groups with $\alpha_5 = 1, \alpha_4 = 0.75$, and $\alpha_3 = 0.5$, a total compliant demand $r^C$ with demand fraction $\hat{\mu}$ comprised of these demand groups was considered, and each path was recommended to $\frac{r^{\alpha_5}}{r^C}$, $\frac{r^\alpha_4}{r^C}$ and $\frac{r^\alpha_3}{r^C}$ amount of demand respectively. 

\subsubsection{Single-Commodity Networks\label{Sect: Single Commodity Networks}}
Given the significance of parallel-path networks in the Stackelberg routing literature, two single-commodity networks were considered for validating the performance of TASR algorithm, and subnetworks of the Sioux Falls network and the Chicago Sketch network. 

The first single-commodity network considered is a subgraph of the Sioux Falls network with a single origin-destination pair $(20, 10)$ with four available parallel paths comprised of 16 edges. On this network, three scenarios are simulated: $\Delta=5$, $\Delta=10$, and $\Delta=15$. In each scenario, one-sixth of the network demand is fully compliant and noncompliant respectively. 

%Additionally, in the second scenario, the average trust of each demand group after receiving recommendations and traversing the network is recorded. For starting trust values of 0.5, 0.75, and 1.0, the trust of individual travel demands is plotted after 500 repeated interactions with the system for each algorithm. In each of these experiments, the step size $\varepsilon$ is assumed to be 0.25.

\begin{table*}[!t]
\caption{Comparison of Average Congestion and Runtime (in Seconds) on Different Single-Commodity Road Networks over 1000 Iterations.}
\label{tab: SC Travel Time}
\centering
\sisetup{detect-weight=true,detect-inline-weight=math}
\begin{tabular}{c c c c c c c c}
\hline
\\[-1.5ex]
\multicolumn{1}{c}{} & \multirow{2}*{\textbf{Attribute}} & \textbf{TASR} & \textbf{CC*} & \textbf{LLF} & \textbf{ASCALE} & \textbf{Scale} & \textbf{Aloof}
\\
\multicolumn{1}{c}{} &  &  Mean (SD) & Mean (SD) & Mean (SD) & Mean (SD) & Mean (SD) & Mean (SD)
\\[1ex]
\hline\hline
\\[-1.5ex]
\multirow{2}[8]*{\textbf{Sioux Falls}} & $\Delta = 5$ & \bfseries \num[round-precision=3,round-mode=figures,
scientific-notation=true]{1542.80} (\num[round-precision=2,round-mode=figures,
scientific-notation=true]{24.24}) & \num[round-precision=3,round-mode=figures,
scientific-notation=true]{1520.02} (\num[round-precision=2,round-mode=figures,
scientific-notation=true]{0.0})  & \num[round-precision=3,round-mode=figures,
scientific-notation=true]{1546.54} (\num[round-precision=2,round-mode=figures,
scientific-notation=true]{27.92}) & \num[round-precision=3,round-mode=figures,
scientific-notation=true]{1546.77} (\num[round-precision=2,round-mode=figures,
scientific-notation=true]{28.09}) & \num[round-precision=3,round-mode=figures,
scientific-notation=true]{1545.32} (\num[round-precision=2,round-mode=figures,
scientific-notation=true]{27.14}) & \num[round-precision=3,round-mode=figures,
scientific-notation=true]{1783.76} (\num[round-precision=2,round-mode=figures,
scientific-notation=true]{189.07})
\\[1ex]
\multirow{2}[8]*{(Small)} & Runtime & \num[round-precision=2,round-mode=figures,
scientific-notation=true]{0.00044552016258239745} & \num[round-precision=2,round-mode=figures,
scientific-notation=true]{0.01462411880493164} & \bfseries \num[round-precision=2,round-mode=figures,
scientific-notation=true]{0.0003933103084564209} & \num[round-precision=2,round-mode=figures,
scientific-notation=true]{0.0006014521121978759}  & \num[round-precision=2,round-mode=figures,
scientific-notation=true]{0.0005173232555389405} & \num[round-precision=2,round-mode=figures,
scientific-notation=true]{0.000557107925415039}
\\[-1ex]
& \cline{1-7}
\\[-1ex]
& $\Delta = 10$ & \bfseries \num[round-precision=3,round-mode=figures,
scientific-notation=true]{3087.50} (\num[round-precision=2,round-mode=figures,
scientific-notation=true]{52.57}) & \num[round-precision=3,round-mode=figures,
scientific-notation=true]{3040.48} (\num[round-precision=2,round-mode=figures,
scientific-notation=true]{0.0}) & \num[round-precision=3,round-mode=figures,
scientific-notation=true]{3092.66} (\num[round-precision=2,round-mode=figures,
scientific-notation=true]{54.44}) & \num[round-precision=3,round-mode=figures,
scientific-notation=true]{3093.42} (\num[round-precision=2,round-mode=figures,
scientific-notation=true]{54.83})  & \num[round-precision=3,round-mode=figures,
scientific-notation=true]{3090.06} (\num[round-precision=2,round-mode=figures,
scientific-notation=true]{55.79}) & \num[round-precision=3,round-mode=figures,
scientific-notation=true]{3601.13} (\num[round-precision=2,round-mode=figures,
scientific-notation=true]{353.60})
\\[1ex]
 & Runtime & \bfseries \num[round-precision=2,round-mode=figures,
scientific-notation=true]{0.0003835761547088623} & \num[round-precision=2,round-mode=figures,
scientific-notation=true]{0.05498766899108887} & \num[round-precision=2,round-mode=figures,
scientific-notation=true]{0.0004148082733154297} & \num[round-precision=2,round-mode=figures,
scientific-notation=true]{0.0006085138320922851} & \num[round-precision=2,round-mode=figures,
scientific-notation=true]{0.0005823702812194824}  & \num[round-precision=2,round-mode=figures,
scientific-notation=true]{0.0006254587173461914}
\\[1ex]

\hline
\\[-1ex]
\multirow{2}[8]*{\textbf{Chicago Sketch}} & $\Delta = 5$ & \bfseries \num[round-precision=3,round-mode=figures,
scientific-notation=true]{6619.97} (\num[round-precision=2,round-mode=figures,
scientific-notation=true]{415.10}) & \num[round-precision=3,round-mode=figures,
scientific-notation=true]{6190.84} (\num[round-precision=2,round-mode=figures,
scientific-notation=true]{0.0}) & \num[round-precision=3,round-mode=figures,
scientific-notation=true]{6628.967} (\num[round-precision=2,round-mode=figures,
scientific-notation=true]{0.00042742586135864256}) & \num[round-precision=3,round-mode=figures,
scientific-notation=true]{6668.102450754354} (\num[round-precision=2,round-mode=figures,
scientific-notation=true]{440.08}) & \num[round-precision=3,round-mode=figures,
scientific-notation=true]{6624.05} (\num[round-precision=2,round-mode=figures,
scientific-notation=true]{390.40}) & \num[round-precision=3,round-mode=figures,
scientific-notation=true]{7446.01}(\num[round-precision=2,round-mode=figures,
scientific-notation=true]{876.20})
\\[1ex]
\multirow{2}[8]*{(Medium)} & Runtime & \bfseries \num[round-precision=2,round-mode=figures,
scientific-notation=true]{0.0004139842987060547} & \num[round-precision=2,round-mode=figures,
scientific-notation=true]{0.12822628021240234} & \num[round-precision=2,round-mode=figures,
scientific-notation=true]{0.00042742586135864256} & \num[round-precision=2,round-mode=figures,
scientific-notation=true]{0.0007459816932678222} & \num[round-precision=2,round-mode=figures,
scientific-notation=true]{0.0007505965232849121} & \num[round-precision=2,round-mode=figures,
scientific-notation=true]{0.0007127838134765625}
\\[-1ex]
& \cline{1-7}
\\[-1ex]
& $\Delta = 10$ & \bfseries \num[round-precision=3,round-mode=figures,
scientific-notation=true]{13182.31} (\num[round-precision=2,round-mode=figures,
scientific-notation=true]{822.30}) & \num[round-precision=3,round-mode=figures,
scientific-notation=true]{12382.83} (\num[round-precision=2,round-mode=figures,
scientific-notation=true]{0.0}) & \num[round-precision=3,round-mode=figures,
scientific-notation=true]{13300.47} (\num[round-precision=2,round-mode=figures,
scientific-notation=true]{797.50}) & \num[round-precision=3,round-mode=figures,
scientific-notation=true]{13282.22} (\num[round-precision=2,round-mode=figures,
scientific-notation=true]{855.41}) & \num[round-precision=3,round-mode=figures,
scientific-notation=true]{13290.45} (\num[round-precision=2,round-mode=figures,
scientific-notation=true]{831.67}) & \num[round-precision=3,round-mode=figures,
scientific-notation=true]{15006.66} (\num[round-precision=2,round-mode=figures,
scientific-notation=true]{1663.47})
\\[1ex]
& Runtime & \num[round-precision=2,round-mode=figures,
scientific-notation=true]{0.0005026636123657227} & \num[round-precision=2,round-mode=figures,
scientific-notation=true]{0.09341907501220703} & \bfseries \num[round-precision=2,round-mode=figures,
scientific-notation=true]{0.0003577713966369629} & \num[round-precision=2,round-mode=figures,
scientific-notation=true]{0.0007834534645080566} & \num[round-precision=2,round-mode=figures,
scientific-notation=true]{0.0007111339569091797} & \num[round-precision=2,round-mode=figures,
scientific-notation=true]{0.0008128609657287598}
\\[1ex]
\hline
%\\[-1ex]
%\multirow{2}*{\textbf{Sydney}} & $\Delta = 5$ & &  &  &  & 
%\\[1ex]
%(Large) & $\Delta = 10$ & &  &  &  &  & 
%\\[1ex]
%\hline
\end{tabular}
{\\[1ex] \footnotesize $^*$CC denotes a special case of TASR in which all demand is completely compliant.}
% The optimal flow for CC is computed once prior to running TASR, LLF, ASCALE, Scale, and Aloof.}
\end{table*}

\begin{table}[!t]
\caption{Average travel time (in mins) per unit demand in single-commodity Sioux Falls network.}
\label{tab: SF Single Commodity Travel Times}
\centering
\begin{tabular}{c c c c}
\hline
\\[-1.5ex]
\multicolumn{1}{c}{}{} & \textbf{$\Delta = 5$} & \textbf{$\Delta = 10$} & \textbf{$\Delta = 15$}
\\[0.5ex]
%\multicolumn{1}{c}{}{} & ($r = 80$ vehicles/hour) & ($r = 160$ vehicles/hour) &
%\multicolumn{1}{c}{}{} &  &  &
%\\[1ex]
\hline \hline
\\[-1ex]
%\textbf{TASR} & 19.91 & 44.96 
\textbf{TASR} & \textbf{19.285} & \textbf{19.297} & \textbf{19.296} 
\\[1ex]
%\textbf{CC*} & 19.84 & 43.24 
\textbf{CC*} & 19.000 & 19.003 & 19.014
\\[1ex]
%\textbf{LLF} & 19.99 & 48.07 
\textbf{LLF} & 19.332 & 19.329 & 19.351
\\[1ex]
%\textbf{ASCALE} & 19.97 & 48.08 
\textbf{ASCALE} & 19.335 & 19.334 & 19.339
\\[1ex]
%\textbf{Scale} & 20.20 & 51.46 
\textbf{Scale} & 19.317 & 19.332 & 19.324
\\[1ex]
%\textbf{Aloof} & 23.76 & 57.47 
\textbf{Aloof} & 22.297 & 22.507 & 22.526
\\[1ex]
\hline
\end{tabular}
{\\[1ex] \footnotesize $^*$CC denotes a special case of TASR in which all demand is completely compliant.}
\vspace{-3ex}
\end{table}

\begin{figure}[!t]
 \vspace{-0.1in}
\centering
\captionsetup{justification=centering}
\includegraphics[width=0.475\textwidth]{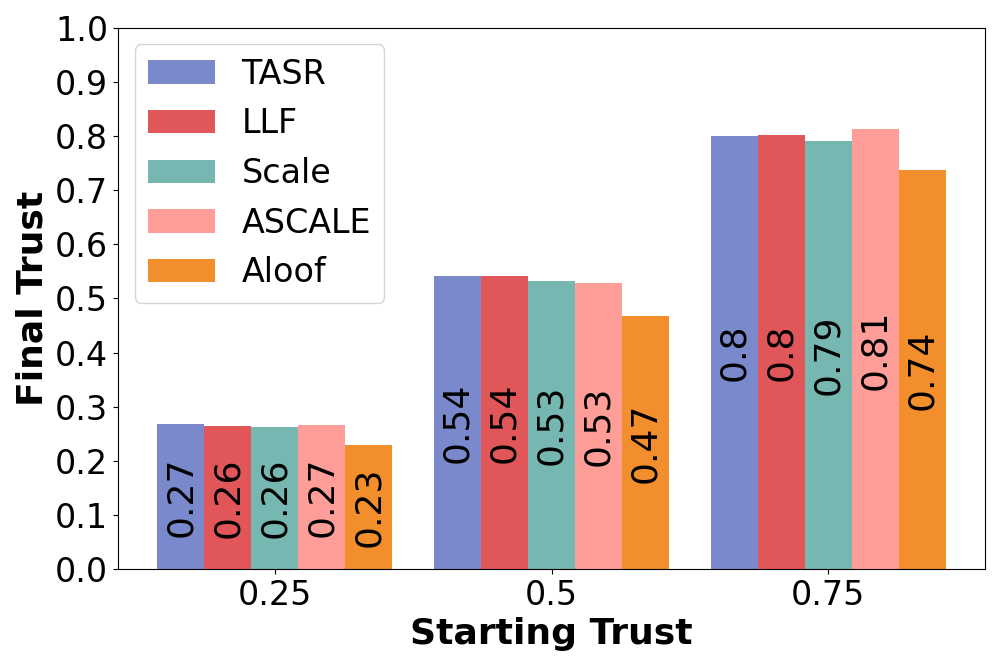}
\caption{Comparison of Starting Trust to Final Trust for each Algorithm on the Single-Commodity Sioux Falls Network, Averaged over 500 Independent Iterations with $\varepsilon = 0.25$.}
\label{Img: SF Subnet Avg Trust}
\end{figure}

\begin{comment}
\begin{figure}[!t]
 \vspace{0.05in}
\centering
\captionsetup{justification=centering}
\includegraphics[width=0.48\textwidth]{images/updated_trust_half_delta10_plot.png}
% \vspace{-0.1in}
\caption{Comparison of trust dynamics over 50 repeated interactions on the single-commodity Sioux Falls network with $\Delta = 10$, when trust is initialized at 0.5 for all agents with $\varepsilon = 0.25$.}
\label{Img: SF Subnet Repeated Trust 0.5}
\vspace{-0.1in}
\end{figure}

\begin{figure}[!t]
\vspace{-0.1in}
\centering
\captionsetup{justification=centering}
\includegraphics[width=0.48\textwidth]{images/updated_trust_half_delta20.png}
%\vspace{-0.2in}
\caption{Comparison of trust dynamics over 50 repeated interactions on the single-commodity Sioux Falls network with $\Delta = 20$, when trust is initialized at 0.5 for all agents with $\varepsilon = 0.25$. \textcolor{red}{Isn't there a way to combine Figs. 3 and 4 since the only difference is the value of ]Delta?}}
\label{Img: SF Subnet Repeated Trust 0.5 Delta 20}
\vspace{-0.2in}
\end{figure}
\end{comment}

\begin{figure}[!t]
\centering
\captionsetup{justification=centering}
\includegraphics[width=0.48\textwidth]{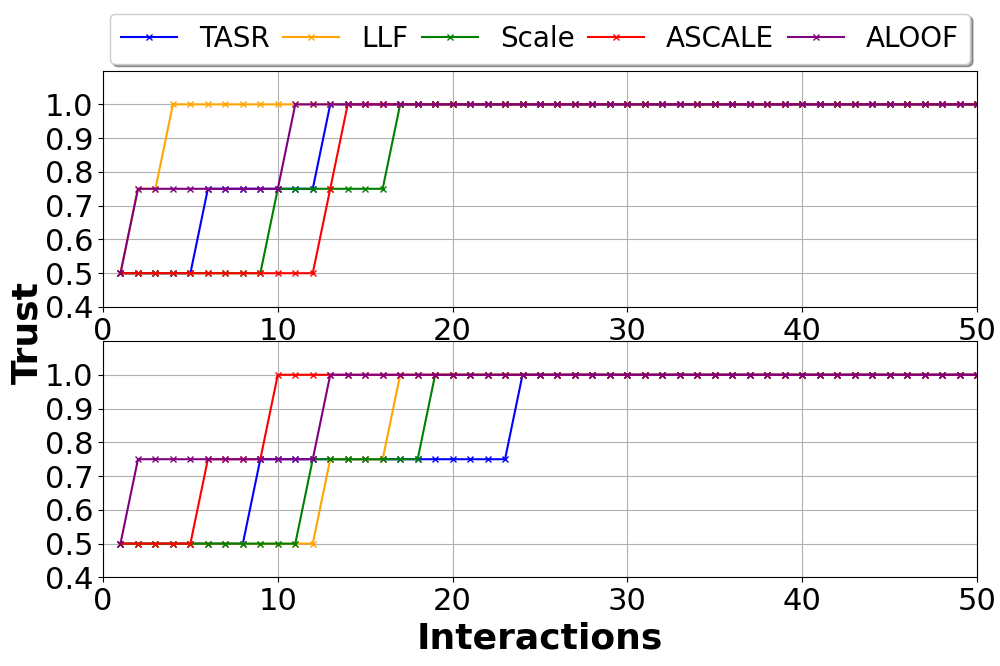}
\caption{Comparison of Trust Dynamics over 50 Repeated Interactions on the Single-Commodity Sioux Falls Network with $\Delta = 10$ (Top) and $\Delta = 20$ (Bottom), when Trust is Initialized at 0.5 for All Agents with $\varepsilon = 0.25$.}
\label{Img: SF Subnet Repeated Trust 0.5 Delta 10 and 20}
\end{figure}

\begin{comment}
\begin{figure}[!t]
% \vspace{-0.1in}
\centering
\captionsetup{justification=centering}
\includegraphics[width=0.48\textwidth]{images/updated_trust_iter_5.png}
% \vspace{-0.1in}
\caption{Comparison of Trust Dynamics over 500 repeated interactions when trust is initialized at 0.5 for all agents.}
\label{Img: SF Subnet Repeated Trust 0.5}
\vspace{-0.1in}
\end{figure}
\end{comment}

\begin{table*}[!t]
\caption{Comparison of Average Congestion and Standard Deviation on Different Multi-Commodity Road Networks.}
\label{tab: MC Travel Time}
\centering
\sisetup{detect-weight=true,detect-inline-weight=math}
\begin{tabular}{c c c c c c c c}
\hline
\\[-1.5ex]
\multicolumn{1}{c}{} & \textbf{Demand} & \textbf{TASR} & \textbf{CC*} & \textbf{LLF} & \textbf{ASCALE} & \textbf{Scale} & \textbf{Aloof}
\\
\multicolumn{1}{c}{} & \textbf{Rate} &  Mean (SD) & Mean (SD) & Mean (SD) & Mean (SD) & Mean (SD) & Mean (SD)
\\[1ex]
\hline\hline
\\[-1.5ex]
\multirow{2}*{\textbf{Sioux Falls}} & $\Delta = 5$ & \bfseries \num[round-precision=3,round-mode=figures,
scientific-notation=true]{9931.30} (\num[round-precision=2,round-mode=figures,
scientific-notation=true]{548.36})  & \num[round-precision=3,round-mode=figures,
scientific-notation=true]{8211.36} (\num[round-precision=2,round-mode=figures,
scientific-notation=true]{432.92}) & \num[round-precision=3,round-mode=figures,
scientific-notation=true]{9944.20} (\num[round-precision=2,round-mode=figures,
scientific-notation=true]{555.76}) &  \num[round-precision=3,round-mode=figures,
scientific-notation=true]{10051.85} (\num[round-precision=2,round-mode=figures,
scientific-notation=true]{595.09}) & \num[round-precision=3,round-mode=figures,
scientific-notation=true]{9939.61} (\num[round-precision=2,round-mode=figures,
scientific-notation=true]{584.45}) & \num[round-precision=3,round-mode=figures,
scientific-notation=true]{15394.04} (\num[round-precision=2,round-mode=figures,
scientific-notation=true]{721.60})
\\[1ex]
(Small) & $\Delta = 10$ & \bfseries \num[round-precision=3,round-mode=figures,
scientific-notation=true]{17491.09} (\num[round-precision=2,round-mode=figures,
scientific-notation=true]{1040.674}) &  \num[round-precision=3,round-mode=figures,
scientific-notation=true]{15438.2415} (\num[round-precision=2,round-mode=figures,
scientific-notation=true]{814.19}) & \num[round-precision=3,round-mode=figures,
scientific-notation=true]{17487.13} (\num[round-precision=2,round-mode=figures,
scientific-notation=true]{1063.42}) & \num[round-precision=3,round-mode=figures,
scientific-notation=true]{37183.27} (\num[round-precision=2,round-mode=figures,
scientific-notation=true]{1463.75}) & \num[round-precision=3,round-mode=figures,
scientific-notation=true]{36637.717} (\num[round-precision=2,round-mode=figures,
scientific-notation=true]{1508.62}) & \num[round-precision=3,round-mode=figures,
scientific-notation=true]{95915.68} (\num[round-precision=2,round-mode=figures,
scientific-notation=true]{2617.64}) 
\\[1ex]
\hline
\\[-1ex]
\multirow{2}*{\textbf{Chicago Sketch}} & $\Delta = 5$ & \bfseries \num[round-precision=3,round-mode=figures,
scientific-notation=true]{33936.31} (\num[round-precision=2,round-mode=figures,
scientific-notation=true]{3104.20}) & \num[round-precision=3,round-mode=figures,
scientific-notation=true]{30907.83} (\num[round-precision=2,round-mode=figures,
scientific-notation=true]{2649.72}) & \num[round-precision=3,round-mode=figures,
scientific-notation=true]{35429.07} (\num[round-precision=2,round-mode=figures,
scientific-notation=true]{3617.66}) & \num[round-precision=3,round-mode=figures,
scientific-notation=true]{56402.82} (\num[round-precision=2,round-mode=figures,
scientific-notation=true]{6892.11}) & \num[round-precision=3,round-mode=figures,
scientific-notation=true]{61167.58} (\num[round-precision=2,round-mode=figures,
scientific-notation=true]{3541.55}) & \num[round-precision=3,round-mode=figures,
scientific-notation=true]{78675.82} (\num[round-precision=2,round-mode=figures,
scientific-notation=true]{8254.88})
\\[1ex]
(Medium) & $\Delta = 10$ & \bfseries \num[round-precision=3,round-mode=figures,
scientific-notation=true]{58755.56} (\num[round-precision=2,round-mode=figures,
scientific-notation=true]{8162.20})& \num[round-precision=3,round-mode=figures,
scientific-notation=true]{55603.73106} (\num[round-precision=2,round-mode=figures,
scientific-notation=true]{9568.82}) & \num[round-precision=3,round-mode=figures,
scientific-notation=true]{58897.63} (\num[round-precision=2,round-mode=figures,
scientific-notation=true]{8631.89}) & \num[round-precision=3,round-mode=figures,
scientific-notation=true]{112026.60} (\num[round-precision=2,round-mode=figures,
scientific-notation=true]{8200.41}) & \num[round-precision=3,round-mode=figures,
scientific-notation=true]{112053.30} (\num[round-precision=2,round-mode=figures,
scientific-notation=true]{9157.66}) & \num[round-precision=3,round-mode=figures,scientific-notation=true]{116305.90} (\num[round-precision=2,round-mode=figures,
scientific-notation=true]{9000.83}) 
\\[1ex]
\hline
\\[-1ex]
\multirow{2}*{\textbf{Sydney}} & $\Delta = 5$ & \bfseries \num[round-precision=3,round-mode=figures,
scientific-notation=true]{31369981.73} (\num[round-precision=2,round-mode=figures,
scientific-notation=true]{3942892.91}) & \num[round-precision=3,round-mode=figures,
scientific-notation=true]{26425831.67} (\num[round-precision=2,round-mode=figures,
scientific-notation=true]{2958763.23}) & \num[round-precision=3,round-mode=figures,
scientific-notation=true]{35032936.09} (\num[round-precision=2,round-mode=figures,
scientific-notation=true]{4080356.13}) & \num[round-precision=3,round-mode=figures,
scientific-notation=true]{79104369.51} (\num[round-precision=2,round-mode=figures,
scientific-notation=true]{6152217.34}) & \num[round-precision=3,round-mode=figures,
scientific-notation=true]{71883763.34} (\num[round-precision=2,round-mode=figures,
scientific-notation=true]{6251080.92}) & \num[round-precision=3,round-mode=figures,
scientific-notation=true]{125021411.68} (\num[round-precision=2,round-mode=figures,
scientific-notation=true]{8275923.10})
\\[1ex]
(Large) & $\Delta = 10$ & \bfseries \num[round-precision=3,round-mode=figures,
scientific-notation=true]{72605822.71} (\num[round-precision=2,round-mode=figures,
scientific-notation=true]{9583968.60}) & \num[round-precision=3,round-mode=figures,
scientific-notation=true]{64796139.25} (\num[round-precision=2,round-mode=figures,
scientific-notation=true]{9373361.90}) & \num[round-precision=3,round-mode=figures,
scientific-notation=true]{82429390.527} (\num[round-precision=2,round-mode=figures,
scientific-notation=true]{10777161.41}) & \num[round-precision=3,round-mode=figures,
scientific-notation=true]{177473823.31} (\num[round-precision=2,round-mode=figures,
scientific-notation=true]{20267510.59}) & \num[round-precision=3,round-mode=figures,
scientific-notation=true]{182584758.88} (\num[round-precision=2,round-mode=figures,
scientific-notation=true]{23994460.87}) & \num[round-precision=3,round-mode=figures,
scientific-notation=true]{227865777.98} (\num[round-precision=2,round-mode=figures,
scientific-notation=true]{33542552.55})  
\\[1ex]
\hline
\end{tabular}
{\\[1ex] \footnotesize $^*$CC is a special case of TASR where all demand is completely compliant.}
%The optimal flow for CC is computed at each iteration for each set of commodities.}
\end{table*}

\begin{figure*}[!t]
\vspace{-0.1in}
\centering
\begin{subfigure}[b]{0.495\textwidth}
\includegraphics[width=\textwidth]{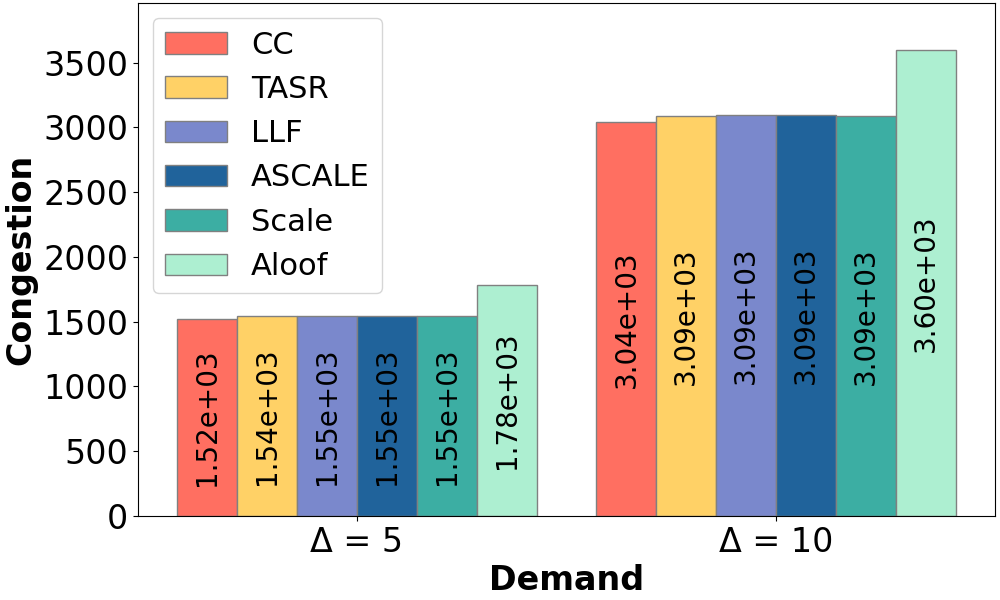}
\vspace{-0.25in}
\caption{Sioux Falls Network}
\label{Img: SF Subnet delta 5 and delta 10}
\end{subfigure}
\hfill
\begin{subfigure}[b]{0.495\textwidth}
\includegraphics[width=\textwidth]{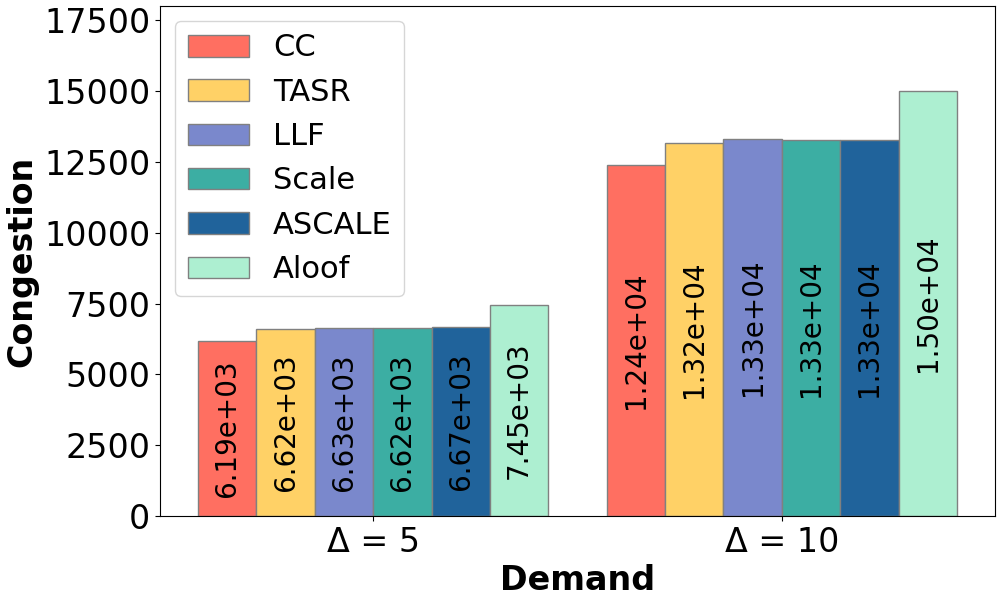}
\vspace{-0.25in}
\caption{Chicago Sketch Network}
\end{subfigure}
\vspace{-0.2in}
\caption{Comparison of Congestion for $\Delta = 5$ and $\Delta = 10$ on the Single-Commodity Sioux Falls and Chicago Sketch Networks.}
\label{Img: CS Subnet delta 5 and delta 10}
\vspace{-0.1in}
\end{figure*}

\begin{table}[!t]
\caption{Comparison of Efficiency Ratio in Single-Commodity Sioux Falls Network.}%\textcolor{red}{SN: What is SF and CS?}}
\label{tab: SC Efficiency Ratios}
\centering
% \vspace{-0.1in}
\begin{tabular}{c c c c c}
\hline
\\[-1ex]
\multicolumn{1}{c}{}{} & \multicolumn{4}{c}{\textbf{Demand Rate}} \\
\multicolumn{1}{c}{}{} & \multicolumn{2}{c}{\textbf{$\Delta = 5$}} & \multicolumn{2}{c}{\textbf{$\Delta = 10$}} \\
\multicolumn{1}{c}{}{} & \textbf{(SF)} & \textbf{(CS)} & \textbf{(SF)} & \textbf{(CS)} \\
\hline \hline
\\[-1ex]
\textbf{TASR} & \textbf{1.014986} & \textbf{1.069312} & \textbf{1.015465} & \textbf{1.064564} \\
\textbf{LLF} & 1.017447 & 1.070765 & 1.017162 & 1.074106 \\
\textbf{ASCALE} & 1.017598 & 1.077086 & 1.017412 & 1.072632 \\
\textbf{Scale} & 1.016645 & 1.069971 & 1.016307 & 1.073297 \\
\textbf{Aloof} & 1.173511 & 1.202740 & 1.184395 & 1.211893 \\
\hline
\end{tabular}
% {\\[1ex] \footnotesize $^*$Efficiency ratio is the ratio between the result obtained from an algorithm and CC.}
{\\[1ex] \footnotesize SF and CS are the Sioux Falls and Chicago Sketch networks respectively.}
\vspace{-3ex}
\end{table}

\begin{comment}
\begin{figure*}[!t]
%\vspace{-0.1in}
\centering
\begin{subfigure}[b]{0.495\textwidth}
\includegraphics[width=\textwidth]{images/sc_sf_runtime_plot.png}
\vspace{-0.25in}
\caption{Sioux Falls Network}
\label{Img: SF Subnet 750 and 1500}
\end{subfigure}
\hfill
\begin{subfigure}[b]{0.495\textwidth}
\includegraphics[width=\textwidth]{images/sc_cs_runtime_plot.png}
\vspace{-0.25in}
\caption{Chicago Sketch Network}
\end{subfigure}
\vspace{-0.2in}
\caption{Comparison of runtime for $\Delta = 5$ and $\Delta = 10$ on the single-commodity Sioux Falls and Chicago Sketch subnetworks.}
\label{Img: CS Subnet 5100 and 8100}
\vspace{-0.2in}
\end{figure*}
\end{comment}

The second single-commodity network considered is a subgraph of the Chicago sketch network with a single origin-destination pair $(725, 700)$ with four available parallel paths comprised of 42 edges. Two scenarios are simulated: $\Delta = 5$ and $\Delta = 10$. In each scenario, one-sixth of the network demand is fully compliant and noncompliant respectively.

For the single-commodity Sioux Falls and Chicago Sketch networks, the average congestion resulting from each algorithm, as well as the standard deviation, is compared to a baseline of the complete compliance case of TASR, which we denote by CC. We consider values of $\Delta=5$ and $\Delta=10$ for both networks, and present the average runtime of each algorithm over 1,000 iterations. Specific details regarding each experiment run on the single-commodity networks will be provided in Section \ref{Sect: Results}.

\subsubsection{Multi-Commodity Network}
In order to study the scalability of TASR on a more realistic network in which overlapping routes apply (i.e., paths for one commodity may contain edges present in paths of a different commodity), simulation experiments are also conducted for the multi-commodity versions of the Sioux Falls, Chicago Sketch, and Sydney networks. Simulation experiments are run to compare the resulting network congestion and standard deviation of each algorithm on each network. In each simulation, a demand group is partitioned and assigned to a commodity at random in the network, and the Multi-Commodity TASR algorithm is run. Values of $\Delta = 5$ and $\Delta = 10$ have been considered for each network.

\section{Results and Discussions \label{Sect: Results}}
\subsection{Results for Single-Commodity Setting}
In the single-commodity Sioux Falls and Chicago Sketch networks with $\Delta = 5$ and $\Delta = 10$, the average network congestion as a result of TASR is compared to LLF, Scale, ASCALE, and Aloof, as well as to the baseline CC. Table \ref{tab: SF Single Commodity Travel Times} shows the resulting congestion for each algorithm in each scenario, as well as algorithmic runtimes. Note that TASR results in a clear improvement in network congestion compared as opposed to the alternative algorithms, since TASR is more considerate of probabilistic compliance of demand groups. These resuls are also presented graphically in Figures \ref{Img: SF Subnet delta 5 and delta 10} and \ref{Img: CS Subnet delta 5 and delta 10}, where the improvement in the performance of TASR as $\Delta$ increases is more apparent. Table \ref{tab: SC Efficiency Ratios} presents these results in the form of efficiency ratios between the resulting congestion of each algorithm compared to the special case of TASR with an entirely fully-compliant demand, CC, showing a clear performance advantage of TASR. The average travel time experienced by each unit of demand for each scenario considered in the single-commodity setting is shown in Table \ref{tab: SC Travel Time}. On average, there is an improvement in per-unit travel time of approximately $0.166\%$, $0.181\%$, and $0.145\%$ for $\Delta = 5$, $\Delta = 10$, and $\Delta = 15$ respectively in the single-commodity Sioux Falls network compared to the next-best performing algorithm, Scale.

%In the first and second scenarios, the total network congestion is calculated for a travel demand of $\Delta = 5$ and $\Delta = 10$ for the following algorithms: TASR, LLF, Scale, ASCALE, and Aloof. The average congestion over 1,000 iterations is computed for each algorithm and compared to the network congestion under complete compliance (refer to Equation \ref{Eqn: SO Flows Subgraph}). The congestion results of the first and second scenario  are shown in Figure \ref{Img: SF Subnet 750 and 1500}, where the "CC" label refers to the complete compliance setting. Note that TASR outperformed LLF, Scale, ASCALE, and Aloof in terms of reducing network congestion. Similar results for the first and second scenarios on the Chicago Skech network are shown in Figure \ref{Img: CS Subnet 5100 and 8100}. Table \ref{tab: SF Single Commodity Travel Times} shows the average travel time per unit of demand for the first and second scenarios in the single-commodity Sioux Falls setting. While the difference in average travel times for the first scenario with relatively low demand is only slight, as the demand doubles in the second scenario, the average travel time advantage of TASR compared to the other algorithms becomes more significant. 

Trust dynamics for repeated interactions with a homogenous demand with $\alpha = 0.5$ in the single-commodity Sioux Falls network with $\Delta = 10$ and $\Delta = 20$ are plotted in Figure \ref{Img: SF Subnet Repeated Trust 0.5 Delta 10 and 20}, with trust converging to $\alpha = 1.0$ for TASR after roughly 12 interactions with the system for $\Delta = 10$ and after roughly 23 iterations with $\Delta = 20$. While trust converges to full compliance after fewer interactions for LLF and Aloof for scenario in which $\Delta = 10$, and trust converges to full compliance after fewer interactions for LLF, Scale, ASCALE, and Aloof in the scenario in which $\Delta = 20$, this can be attributed to TASR producing lower overall network congestion with slightly higher per-demand travel times when compared to each other algorithm. In other words, there is a slight trade-off between trust and lower network congestion for the TASR algorithm. On the other hand, the average change in trust (over five hundred iterations) for individual independent interactions in the single-commodity Sioux Falls network is shown in Figure \ref{Img: SF Subnet Avg Trust} for each algorithm. Note that on average, the change in trust for a single iteration of each algorithm, the TASR algorithm produces comparable updated trust values to that of LLF, Scale, and ASCALE, and greater trust values than that of Aloof.

\subsection{Results for Multi-Commodity Setting}
The results obtained in the multi-commodity setting are very similar to those obtained in the single-commodity setting for networks with lower demand capacities. Table \ref{tab: MC Travel Time} depicts the congestion obtained as a result TASR compared to each algorithm and the complete compliance baseline in the Sioux Falls, Chicago Sketch, and Sydney networks for demands with $\Delta = 5$ and $\Delta = 10$. Similar to the single-commodity settings, TASR outperforms each comparable algorithm by a significant factor, which is more pronounced as the size of the network, network capacity, and amount of demand on the network scales.

\section{Conclusion
\label{Sect: Conclusion}}
In summary, a novel greedy Stackelberg routing framework was proposed to reduce the congestion of a traffic network in the presence of probabilistically compliant travelers (depending on trust in the system) in both single-commodity and multi-commodity traffic settings. The interaction between the system and diverse traveler demands was modeled as a Stackelberg game, and a novel trust-aware, greedy Stackelberg routing strategy was developed to send path recommendations to travel demands in order to mitigate network congestion. We demonstrated the performance gain of the proposed solution in terms of network congestion in comparison to LLF, Scale, ASCALE, and Aloof for networks of different demand capacities and topologies, and we presented the impact of each algorithm on the trust of various travel demands in single-interaction settings and repeated interaction settings. In the future we will investigate the performance of TASR on network congestion in a dynamic, repeated interaction settings across various networks and consider methods of maintaining the trust of traffic demands in order to ensure that the proposed algorithm can sustain its performance in a repeated setting.

\bibliographystyle{plain}
\bibliography{references}

\end{document}